\documentclass[11pt]{article}
\usepackage[utf8]{inputenc}
\usepackage{float}
\usepackage{mathtools}
\usepackage{amsmath}
\usepackage{amsthm}
\usepackage{amssymb}
\usepackage{geometry}
\makeatletter

\floatstyle{ruled}
\newfloat{algorithm}{tbp}{loa}
\providecommand{\algorithmname}{Algorithm}
\floatname{algorithm}{\protect\algorithmname}

\@ifundefined{date}{}{\date{}}
\usepackage{fullpage}
\usepackage{algorithmic}
\usepackage{hyperref}

\makeatother

\theoremstyle{plain}
\newtheorem{thm}{\protect\theoremname}
\newtheorem{lem}[thm]{\protect\lemmaname}
\newtheorem{prop}[thm]{\protect\propositionname}
\newtheorem{cor}[thm]{\protect\corollaryname}
\providecommand{\corollaryname}{Corollary}
\providecommand{\lemmaname}{Lemma}
\providecommand{\propositionname}{Proposition}
\providecommand{\theoremname}{Theorem}

\begin{document}
\global\long\def\R{\mathbb{R}}%

\global\long\def\E{\mathbb{E}}%

\global\long\def\Var{\mathrm{Var}}%

\global\long\def\Prob{\mathbb{P}}%

\global\long\def\tr{\operatorname{tr}}%

\global\long\def\clip{\operatorname{clip}}%

\global\long\def\wtO{\widetilde{O}}%

\global\long\def\whSigma{\widehat{\Sigma}}%

\global\long\def\eps{\epsilon}%

\global\long\def\tstop{t_{\mathrm{stop}}}%

\allowdisplaybreaks
\title{Gap-free Differentially Private PCA for Gaussian Data}
\author{
  Alina Ene\thanks{Department of Computer Science, Boston University, \texttt{aene@bu.edu}.} \and
  Huy L. Nguyen\thanks{Khoury College of Computer and Information Science, Northeastern University, \texttt{hu.nguyen@northeastern.edu}.}
}
\date{}
\maketitle
\begin{abstract}
We give a gap-free $(\epsilon,\delta)$-differentially private algorithm
for the principal component analysis (PCA) problem with Gaussian data.
The algorithm is based on a private variant of the power iteration
method, and it is computationally efficient. 
\end{abstract}

\section{Introduction}

Dimension reduction using principal component analysis is a classical
problem in data analysis with many applications. In certain applications
involving sensitive data, it is important to protect the privacy of
the users. Differential privacy has emerged as the gold standard for
privacy protection and it is widely adopted in both private enterprises
and government entities. In this work, we study the problem of finding
the approximately best-fit one-dimension subspace, arguably the most
basic form of PCA, with differential privacy. This problem has been
studied in many prior works including \cite{DBLP:conf/stoc/HardtR12,DBLP:conf/stoc/HardtR13,DBLP:conf/nips/HardtP14,DBLP:conf/nips/LiuK0O22,liu2022differential,DBLP:conf/stoc/DworkTT014}
and many others. We consider the formulation where the data is generated
i.i.d. from an unknown Gaussian distribution and the goal is to recover
an approximately best fit one dimension subspace of the distribution
while satisfying differential privacy for general data.

Formally, we assume we are given a dataset consisting of $n$ datapoints
$y_{1},\dots,y_{n}\in\R^{d}$ where each datapoint is the private
data of a user. We consider replacement adjacency, i.e., two datasets
are neighboring if one is obtained from the other by replacing one
datapoint. Our goal is to design an $(\epsilon,\delta)$-differentially
private algorithm that ensures privacy for all inputs and achieves
good utility for inputs that are sampled i.i.d. from a Gaussian distribution
with an unknown covariance matrix. More precisely, for the utility
analysis, we assume $y_{1},\dots,y_{n}\stackrel{\mathrm{iid}}{\sim}N(0,\Sigma)$
where $\Sigma=S^{2}\succeq0$ is unknown. Let $s_{1}:=\|S\|_{2}$
and $\lambda_{1}:=\|\Sigma\|_{2}=s^{2}_{1}$. Our target utility guarantee
is that, for any approximation factor $\alpha\in(0,1)$ given as input,
with constant probability, the algorithm outputs a unit vector $x$
satisfying
\[
x^{\top}\Sigma x\ge(1-\alpha)\lambda_{1}.
\]
Liu et al. \cite{liu2022differential} give a computationally inefficient
algorithm that constructs an $\alpha$-approximate solution using
$n=\widetilde{\Theta}\left(\frac{d}{\alpha^{2}}+\frac{d}{\epsilon\alpha}\right)$
samples, where the $\widetilde{O}/\widetilde{\Omega}/\widetilde{\Theta}$
notation suppresses factors that are logarithmic in $d$, $1/\alpha$,
$1/\epsilon$, and $1/\delta$. Here we give a computationally efficient
algorithm that attains this sample complexity. 

For setting the clipping radius and noise scale of our algorithm,
we assume that a public constant-factor approximation of $\lambda_{1}$
is available, i.e. we assume we are given a public value $\Lambda$
satisfying
\begin{equation}
\lambda_{1}\le\Lambda\le2\lambda_{1}.\label{eq:scale}
\end{equation}
This assumption is for ease of exposition. If there is a known finite
range for $\Lambda$, one can guess this value from powers of two
and use the exponential mechanism to select the best solution from
all the guesses. Alternatively, without any known finite range, one
can use $\wtO(\log(1/\delta)/\eps)$ groups of $\wtO(d)$ samples
each to populate a private histogram of the top eigenvalues of each
group. Then we use the private histogram algorithm by \cite{KarwaVadhan2018}
to select the power of two closest to the most number of top eigenvalues.
We omit the details for these approaches to obtain $\Lambda$ privately.
\begin{thm}
\label{thm:main} Suppose that $0<\alpha<c$ where $c$ is a sufficiently
small absolute constant, $0<\epsilon<1$, and $0<\delta<1/10$. Let
$\Lambda$ be a public value. Algorithm \ref{alg:Algorithm-clipped-pca}
is $(\epsilon,\delta)$-differentially private on all inputs. If $\Lambda$
satisfies $\lambda_{1}\leq\Lambda\leq2\lambda_{1}$ and the datapoints
are sampled from the Gaussian distribution, with constant probability,
the algorithm outputs a unit vector $x$ satisfying 
\[
x^{\top}\Sigma x\ge(1-\alpha)\lambda_{1},
\]
provided that the number of samples is
\[
n=\widetilde{\Theta}\left(\frac{d}{\alpha^{2}}+\frac{d}{\epsilon\alpha}\right).
\]
\end{thm}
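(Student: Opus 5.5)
We will analyze Algorithm \ref{alg:Algorithm-clipped-pca} as a private power iteration run for $T=\wtO(1/\alpha)$ rounds, where each round draws a fresh batch of $m=n/T$ samples. Round $t$ computes
\[
v_{t}=\frac{1}{m}\sum_{i\in B_{t}}\clip\big(y_{i}y_{i}^{\top}x_{t-1}\big)+g_{t},\qquad x_{t}=v_{t}/\|v_{t}\|.
\]
Here $\clip$ truncates the vector $y_i (y_i^\top x_{t-1})$ to norm $C=\wtO(\sqrt{d\Lambda}\cdot\sqrt{\Lambda})$, and $g_t$ is Gaussian noise of scale $\sigma=\wtO(C\sqrt{\log(1/\delta)}/(\epsilon m))$. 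We start from a uniformly random unit vector $x_0$.

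\textbf{Privacy.} The batches are disjoint, so each datapoint affects only one round. Under replacement adjacency, the clipped average in that round has $\ell_2$ sensitivity $2C/m$. The Gaussian mechanism then gives $(\epsilon,\delta)$-DP for that round. Parallel composition, together with post-processing for the normalization and all later rounds, yields $(\epsilon,\delta)$-DP on all inputs, with no distributional assumption. If instead the algorithm reuses samples across rounds, we would apply advanced composition over the $T$ rounds, which only costs log factors.

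\textbf{Utility.} We decompose $v_t=\Sigma x_{t-1}+E_t$. The error $E_t$ has three parts.
\begin{itemize}
\item A sampling part $(\whSigma_t-\Sigma)x_{t-1}$. Since $x_{t-1}$ is independent of the fresh batch, this vector is approximately Gaussian with norm about $\sqrt{\lambda_1 \tr\Sigma/m}\le\lambda_1\sqrt{d/m}$.
\item A clipping bias. This is negligible: with high probability $\|y_i\|\le\wtO(\sqrt{d\Lambda})$ and $|y_i^\top x|\le\wtO(\sqrt\Lambda)$, so nothing is clipped.
\item The privacy noise $g_t$, of norm about $\sigma\sqrt d=\wtO(\lambda_1 d/(\epsilon m))$.
\end{itemize}
Both random errors are essentially isotropic, so they should be compared to $\lambda_1$ through a gap-free power-method potential rather than a worst-case bound. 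Let $P$ project onto the eigenspace with eigenvalues $\ge(1-\alpha)\lambda_1$, and $Q=I-P$. We track the ratio $\|Qx_t\|/\|Px_t\|$. The signal multiplies $\|Px\|$ by at least $(1-\alpha/2)\lambda_1$ in effect, while $Q$-components contract relative to it by a factor $(1-\alpha)/(1-\alpha/2)$. After $T=O(\log(d)/\alpha)$ rounds, the mass from $x_0$ on the bad directions, initially $\le\sqrt d\,\|Px_0\|$ up to constants, is killed.

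The main obstacle is the noise injected each round. A Gaussian error adds $\|E_t\|/\sqrt d$ per direction in expectation, but it can land on $Q$ with norm nearly $\|E_t\|$. The noise is not attenuated over the last $\sim1/\alpha$ rounds. So we need only $\|QE_t\|\lesssim\sqrt\alpha\,\lambda_1\|Px_{t-1}\|$ at the steady state, because the noise terms accumulate geometrically with factor $1-\alpha/2$ and sum to about $\|QE\|/\alpha$ in a random-walk sense. A random-walk sum of independent Gaussian terms actually gives $\|QE\|/\sqrt\alpha$.

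I would first try to prove by induction that $\|Px_t\|\ge c$ after a warm-up phase, and that $\|Qx_t\|$ stays bounded. For the warm-up, $\|Px_0\|\gtrsim 1/\sqrt d$, and the noise component along $P$ grows the signal. The final output then satisfies $x^\top\Sigma x\ge(1-\alpha)\lambda_1\|Px\|^2$. This requires $\|Px\|^2\ge 1-O(\alpha)$ in the weighted sense, meaning $\sum_{j\in Q}(\lambda_1-\lambda_j)x_j^2\le\alpha\lambda_1$. That sum is controlled once $\|E_t\|\lesssim\sqrt\alpha\lambda_1$. This condition becomes $m\gtrsim d/\alpha$ and $m\gtrsim d/(\epsilon\sqrt\alpha)$, and with $n=mT$ these give the claimed $n=\wtO(d/\alpha^2+d/(\epsilon\alpha))$. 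Getting this tight accounting right, using the per-direction averaging of isotropic noise across rounds instead of a worst-case bound, is the step I expect to be hardest.
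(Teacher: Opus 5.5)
You are analyzing a different algorithm from the one in the theorem. Algorithm \ref{alg:Algorithm-clipped-pca} uses all $n$ samples in every round, so $x_t$ is a function of every $y_i$. Your no-clipping step (``$|y_i^\top x|\le\wtO(\sqrt\Lambda)$ with high probability'') is justified only for $x$ independent of $y_i$, and disjoint batches give you that independence for free. With reuse, $x_t$ may a priori correlate with $y_i$. Then $\|y_iy_i^\top x_t\|$ can be as large as $\|y_i\|^2\approx\tr\Sigma$, which can reach $d\lambda_1\gg R$, and clipping would bias the iteration. Your remark that advanced composition handles reuse addresses privacy only, not this utility issue. The missing idea is the decoupling Lemma \ref{lem:decouple}. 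Each $x_t$ is post-processing of an $(\eps,\bar\delta)$-DP transcript, so $\Prob[\|y_iy_i^\top x_t\|>R]\le 3e^{\eps-u}+\bar\delta$, just as if $y_i$ were a fresh sample. A union bound over the $nT$ pairs (this is why $\bar\delta\le\gamma/(10nT)$) then shows that, with probability $1-\gamma$, the run coincides with an unclipped noisy power iteration on the fixed matrix $\whSigma$.

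The batching also costs you the claimed rate, and your final arithmetic is wrong. Your own per-round requirement, privacy-noise norm $\wtO(\lambda_1 d/(\eps m))\lesssim\sqrt\alpha\,\lambda_1$, gives $m\gtrsim d/(\eps\sqrt\alpha)$. Hence $n=mT=\wtO(d/(\eps\alpha^{3/2}))$, not $\wtO(d/(\eps\alpha))$. Splitting makes the per-round sensitivity scale like $T/n$. Reusing all the data and composing over $T$ rounds (Proposition \ref{prop:privacy}) makes $\tau$ scale like $\sqrt{T}/n$, which saves a factor $\sqrt{T}=\wtO(\alpha^{-1/2})$. The per-round condition $d\tau^2\lesssim\alpha\lambda_1^2$ is essentially necessary for a noisy power iteration: when $\Sigma$ is nearly rank one, a fresh noise vector of norm $\sqrt d\,\tau$ costs about $d\tau^2/\lambda_1$ in Rayleigh quotient. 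So this loss is intrinsic to splitting, and it is exactly why the paper reuses the data and needs the decoupling argument.

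Finally, your convergence step is only a plan. The paper avoids a last-iterate $\|Qx_t\|/\|Px_t\|$ induction in three ways:
\begin{itemize}
\item It telescopes $\Phi_t=\log\bigl((v^\top x_t)^2\bigr)$ for the top eigenvector $v$ of $\whSigma$, controlling the martingale and quadratic noise terms with a stopping time plus Doob's and Markov's inequalities.
\item It returns a uniformly random iterate, so an average-suboptimality bound suffices.
\item It handles the sampling error once, through $\|\whSigma-\Sigma\|_2\le c\alpha\lambda_1$ for $n\gtrsim d/\alpha^2$.
\end{itemize}
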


\textbf{Related work. }As noted above, Liu et al. \cite{liu2022differential}
give a computationally inefficient gap-free algorithm with the sample
complexity stated above. In the setting where $\Sigma$ has a positive
eigenvalue gap $\lambda_{1}-\lambda_{2}>0$, Liu et al. \cite{DBLP:conf/nips/LiuK0O22}
and Nguyen et al. \cite{NguyenEN26} give algorithms with error guarantees
and sample complexities that depend on the gap. The algorithms of
Liu et al. \cite{liu2022differential} and Liu et al. \cite{DBLP:conf/nips/LiuK0O22}
are for the more general sub-Gaussian distributions. There has also
been an extensive line of work under varying models of privacy for
worst-case data with utility guarantees that are parameterized by
the coherence of the data matrix \cite{DBLP:conf/stoc/HardtR12,DBLP:conf/stoc/HardtR13,DBLP:conf/nips/HardtP14,DBLP:conf/colt/BalcanDWY16,NicolasSMMC25,dOrsiN26,NguyenEN26}.
When the datapoints are sampled from a Gaussian distribution, the
data matrix is known to have low coherence (see e.g. \cite{NguyenEN26}).
Several of these algorithms are for the setting where the input data
matrix whose rows are the datapoints $y_{i}$ has a positive singular
value gap. We refer the reader to \cite{NguyenEN26} for a detailed
discussion of these works, and to \cite{liu2022differential} for
a discussion of other related work.

\textbf{Acknowledgements}. The proofs were developed with GPT 5.6
Sol and GPT 6 Astra on September 14, 2026 based on our prior ideas
with adaptive thresholding for non-random data \cite{NguyenEN26}
and martingale analysis for stochastic gradient descent \cite{LiuNNEN23}
and they were edited and rearranged by the authors. We verified and
rewrote all the proofs. We thank Gavin Brown for letting us know about
the problem at the Workshop on the Intersections of Differential Privacy
and Sublinear Algorithms. We also acknowledge a concurrent work by
Gu et al. \cite{GuKumarTianYang2026}, which also settled this problem
posed by Gavin Brown and appeared on arXiv a few days before our manuscript.
Their work also uses privacy to justify adaptive clipping but appears
to use a more global technique rather than our single step analysis,
which is more reminiscent of the proof for stochastic gradient descent.

\section{Preliminaries}

We first note a few useful facts regarding Gaussian concentration.
The following lemma gives the concentration of the length of a $d$-dimensional
standard Gaussian vector.
\begin{lem}
\cite[Section 4.1]{LaurentMassart2000}\label{lem:chisq} If $g\sim N(0,I_{d})$,
\[
\Prob[\left\Vert g\right\Vert ^{2}-d\ge2\sqrt{du}+2u]\le e^{-u},
\]
and 
\[
\Prob[d-\left\Vert g\right\Vert ^{2}\ge2\sqrt{du}]\le e^{-u}.
\]
 
\end{lem}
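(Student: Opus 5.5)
The statement immediately above is Lemma~\ref{lem:chisq}, the Laurent--Massart tail bound for $\|g\|^2$ with $g\sim N(0,I_d)$. I will prove both tails with the Chernoff method applied to $Z=\|g\|^2-d=\sum_{i=1}^d(g_i^2-1)$. The proof rests on the exact log-moment generating function of a centered $\chi^2_1$ variable. For $\theta<1/2$,
\[
\log\E e^{\theta(g_i^2-1)}=-\theta-\tfrac12\log(1-2\theta).
\]

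\textbf{Upper tail.} For $0<\theta<1/2$ I will use the series expansion $-\log(1-x)-x=\sum_{k\ge2}x^k/k\le \frac{x^2}{2(1-x)}$ with $x=2\theta$. This gives
\[
\log\E e^{\theta Z}\le \frac{d\theta^2}{1-2\theta},
\]
which is a sub-gamma bound with variance factor $v=2d$ and scale $c=2$. The standard sub-gamma computation then applies. For $\psi(\theta)=\frac{v\theta^2}{2(1-c\theta)}$, the Legendre transform $\psi^*$ satisfies $\psi^{*-1}(u)=\sqrt{2vu}+cu$. Markov's inequality applied to $e^{\theta Z}$ gives $\Prob[Z\ge t]\le e^{-\psi^*(t)}$, and setting $t=\sqrt{2vu}+cu=2\sqrt{du}+2u$ yields the first inequality. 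To verify the inverse directly rather than cite it, I will choose $\theta=\frac{\sqrt{u}}{\sqrt{d}+2\sqrt{u}}$ with some care, and check algebraically that $\theta t-\frac{d\theta^2}{1-2\theta}\ge u$. With this $\theta$ we have $1-2\theta=\frac{\sqrt d}{\sqrt d+2\sqrt u}$, so the check reduces to routine simplification.

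\textbf{Lower tail.} For $\theta>0$, I will use $-\log(1+x)+x\le x^2/2$ for $x\ge0$ with $x=2\theta$, which gives
\[
\log\E e^{-\theta Z}\le d\theta^2.
\]
Optimizing $\Prob[-Z\ge t]\le \exp(d\theta^2-\theta t)$ at $\theta=t/(2d)$ gives $\exp(-t^2/(4d))$. Taking $t=2\sqrt{du}$ yields the bound $e^{-u}$.

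The only step requiring care is the algebra in the upper tail, namely choosing $\theta$ so that the exponent comes out exactly to $u$. That step is mechanical once the sub-gamma form is established, and the rest is standard.
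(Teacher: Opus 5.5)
Your proof is correct, including the log-MGF bounds $\log\E e^{\theta Z}\le d\theta^2/(1-2\theta)$ and $\log\E e^{-\theta Z}\le d\theta^2$, and your choice $\theta=\sqrt{u}/(\sqrt d+2\sqrt u)$ does give exactly $\theta t-\frac{d\theta^2}{1-2\theta}=u$. The paper gives no proof of its own and only cites Laurent--Massart, whose Section 4.1 argument is this same Chernoff/sub-gamma computation (with general weights $a_i$, here all equal to one), so your approach matches the intended one.
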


The next lemma bounds the correlation between a fixed unit vector
$x\in\R^{d}$ and a random Gaussian vector with covariance $\Sigma$
in terms of $\lambda_{1}=\left\Vert \Sigma\right\Vert _{2}$ and the
dimension $d$.
\begin{lem}
\label{lem:one-sample} Let $y\sim N(0,\Sigma)$. For every fixed
unit vector $x\in\R^{d}$ and every $u\ge1$, 
\begin{equation}
\Prob\left[\|yy^{\top}x\|>\lambda_{1}(\sqrt{d}+\sqrt{2u})\sqrt{2u}\right]\le3e^{-u}.\label{eq:one-sample}
\end{equation}
\end{lem}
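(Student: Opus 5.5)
We would write $yy^{\top}x = y\,(y^{\top}x)$, so that $\|yy^{\top}x\| = |y^{\top}x|\cdot\|y\|$, and bound the two factors separately, using one event of probability at most $e^{-u}$ for the scalar factor and one for the norm factor. Write $y = Sg$ with $g\sim N(0,I_d)$ and $S=\Sigma^{1/2}$.

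\textbf{Scalar factor.} The quantity $y^{\top}x = g^{\top}(Sx)$ is a centered Gaussian with variance $x^{\top}\Sigma x\le\lambda_1$. By the standard Gaussian tail bound,
\[
\Prob\big[|y^{\top}x|>\sqrt{\lambda_1}\sqrt{2u}\big]\le 2e^{-u}.
\]
Using $\Prob[|Z|>t]\le 2e^{-t^2/2}$ at $t=\sqrt{2u}$, this failure event costs $2e^{-u}$.

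\textbf{Norm factor.} We have $\|y\| = \|Sg\|\le \|S\|_2\|g\| = \sqrt{\lambda_1}\,\|g\|$. By Lemma~\ref{lem:chisq}, with probability at least $1-e^{-u}$,
\[
\|g\|^2\le d+2\sqrt{du}+2u\le(\sqrt d+\sqrt{2u})^2,
\]
since $(\sqrt d+\sqrt{2u})^2 = d+2\sqrt{2du}+2u\ge d+2\sqrt{du}+2u$. Hence $\|y\|\le\sqrt{\lambda_1}(\sqrt d+\sqrt{2u})$ off an event of probability $e^{-u}$.

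\textbf{Combining.} A union bound over the two failure events gives, with probability at least $1-3e^{-u}$,
\[
\|yy^{\top}x\|\le \lambda_1(\sqrt d+\sqrt{2u})\sqrt{2u},
\]
which is \eqref{eq:one-sample}. No step is a genuine obstacle. The only point needing care is whether the statement's strict inequality matches the tail bounds, and it does because both bounds hold for the corresponding events. The hypothesis $u\ge1$ is not actually needed for this argument.
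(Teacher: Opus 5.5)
Your proof is correct and follows the paper's argument essentially step for step. Like the paper, you factor $\|yy^{\top}x\|=\|y\|\,|y^{\top}x|$, bound $\|y\|\le s_1\|g\|$ via Lemma~\ref{lem:chisq}, bound $|y^{\top}x|$ by a Gaussian tail with variance at most $\lambda_1$, and take a union bound; you also spell out the inequality $d+2\sqrt{du}+2u\le(\sqrt{d}+\sqrt{2u})^2$, which the paper leaves implicit.
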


\begin{proof}
First, note that $\|yy^{\top}x\|=\|y\|\cdot|y^{\top}x|$. Write 
\[
y=Sg,\qquad g\sim N(0,I_{d}).
\]
Then 
\[
\|y\|\le s_{1}\|g\|.
\]
Lemma \ref{lem:chisq} implies 
\[
\Prob[\|y\|>s_{1}(\sqrt{d}+\sqrt{2u})]\le e^{-u}.
\]
Moreover, 
\[
y^{\top}x\sim N(0,x^{\top}\Sigma x),
\]
and 
\[
x^{\top}\Sigma x\le\lambda_{1}=s^{2}_{1}.
\]
Thus 
\[
\Prob[|y^{\top}x|>s_{1}\sqrt{2u}]\le2e^{-u}.
\]
The result then follows by a union bound. 
\end{proof}

The next lemma shows that the empirical covariance is close to the
distributional covariance when $n\gg d$. It follows from standard
results for random matrices \cite[Theorem II.13]{DavidsonSzarek01}. 
\begin{lem}
\label{lem:cov}Let $y_{1},\ldots,y_{n}\sim N(0,\Sigma)$ in $d$
dimensions with $n\ge d$. Let $\whSigma=\frac{1}{n}\sum^{n}_{i=1}y_{i}y^{\top}_{i}$.
With probability $1-\gamma$ for arbitrary constant $\gamma$, $\|\whSigma-\Sigma\|_{2}=O\left(\lambda_{1}\left(\sqrt{\left(d+\log(1/\gamma)\right)/n}+\left(d+\log(1/\gamma)\right)/n\right)\right)$. 
\end{lem}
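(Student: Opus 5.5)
The plan is to reduce to the standard non-asymptotic bound on the extreme singular values of a Gaussian matrix with independent standard entries, the result of \cite[Theorem II.13]{DavidsonSzarek01}. Write $y_{i}=Sg_{i}$ with $g_{i}\stackrel{\mathrm{iid}}{\sim}N(0,I_{d})$. Let $G\in\R^{n\times d}$ be the matrix whose rows are $g^{\top}_{i}$, and set $W=\frac{1}{n}G^{\top}G$. Then
\[
\whSigma=S\,W\,S \quad\text{and}\quad \Sigma=S\,I_{d}\,S,
\]
so
\[
\|\whSigma-\Sigma\|_{2}=\|S(W-I_{d})S\|_{2}\le\|S\|^{2}_{2}\,\|W-I_{d}\|_{2}=\lambda_{1}\|W-I_{d}\|_{2}.
\]
It therefore suffices to bound $\|W-I_{d}\|_{2}$ for an isotropic Gaussian matrix. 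This also covers singular $\Sigma$, since the factorization $\Sigma=S^{2}$ with $S\succeq0$ needs no invertibility.

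\textbf{Singular values of $G$.} By Davidson--Szarek (Gordon's inequality together with Gaussian concentration of the $1$-Lipschitz maps $G\mapsto s_{\max}(G)$ and $G\mapsto s_{\min}(G)$), for every $t\ge0$, with probability at least $1-2e^{-t^{2}/2}$,
\[
\sqrt{n}-\sqrt{d}-t\le s_{\min}(G)\le s_{\max}(G)\le\sqrt{n}+\sqrt{d}+t.
\]
Take $t=\sqrt{2\log(2/\gamma)}$, so this event holds with probability at least $1-\gamma$.

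\textbf{Converting to a bound on $W$.} Every eigenvalue of $W$ equals $s_{j}(G)^{2}/n$. Set $r:=(\sqrt{d}+t)/\sqrt{n}$, which satisfies
\[
r=O\left(\sqrt{(d+\log(1/\gamma))/n}\right).
\]
On the event above, every eigenvalue of $W$ lies in $[(1-r)^{2}_{+},(1+r)^{2}]$. Hence
\[
\|W-I_{d}\|_{2}\le\max\{2r+r^{2},\;2r\}=2r+r^{2}.
\]
For the lower side, if $r\ge1$ the bound $1-(1-r)_{+}^{2}\le1\le2r$ still holds. Substituting $r^{2}=O\left((d+\log(1/\gamma))/n\right)$ gives
\[
\|\whSigma-\Sigma\|_{2}=O\left(\lambda_{1}\left(\sqrt{(d+\log(1/\gamma))/n}+(d+\log(1/\gamma))/n\right)\right),
\]
which is the claim.

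\textbf{Obstacle.} There is no real difficulty here. The only points needing care are handling the lower tail when $r$ is not small, which the case split above does, and converting the Gaussian-concentration tail into the $\log(1/\gamma)$ term. The hypothesis $n\ge d$ is used only to match the cited theorem's setting; the bound as written remains valid (if vacuous) otherwise. If one prefers to avoid the citation, an alternative is a $1/4$-net argument on the sphere combined with Lemma~\ref{lem:chisq}-type tails for $\sum_{i}(g^{\top}_{i}v)^{2}$. This yields the same bound up to constants, at the cost of $O(d)$ in the union-bound exponent, which is already absorbed into the $d$ term.
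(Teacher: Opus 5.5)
Your proposal is correct and follows the same route as the paper. Both factor $\whSigma-\Sigma=S(W-I_{d})S$ to reduce to $\lambda_{1}\|W-I_{d}\|_{2}$, then apply Davidson--Szarek's singular-value bound with $t=\Theta(\sqrt{\log(1/\gamma)})$ to get the $2r+r^{2}$ form. (The paper works in the normalized scale, with $t=\Theta(\sqrt{\log(1/\gamma)/n})$.) Your version is a bit more careful than the paper's, since it tracks the factor $2$ in the two-sided tail and handles the $r\ge1$ case on the lower side explicitly.
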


\begin{proof}
Let $\Sigma=S^{2}$. Note that $y_{i}=Sg_{i}$ where $g_{i}\sim N(0,I)$.
We have $\whSigma-\Sigma=S(\frac{1}{n}\sum_{i=1}g_{i}g^{\top}_{i}-I)S$
so $\|\whSigma-\Sigma\|_{2}\le\lambda_{1}\text{\ensuremath{\left\Vert \frac{1}{n}\sum_{i=1}g_{i}g^{\top}_{i}-I\right\Vert }}_{2}$.
By \cite[Theorem II.13]{DavidsonSzarek01}, with probability $1-\exp(-nt^{2}/2)$,
we have $\text{\ensuremath{\left\Vert \frac{1}{n}\sum_{i=1}g_{i}g^{\top}_{i}-I\right\Vert }}_{2}\le a(2+a)$
where $a=\sqrt{d/n}+t$. The lemma follows from choosing $t=\Theta\left(\sqrt{\log(1/\gamma)/n}\right)$.
\end{proof}

\section{The clipped private power iteration}

For $R>0$, define
\[
\clip_{R}(z):=z\min\left\{ 1,\frac{R}{\|z\|}\right\} .
\]

\begin{algorithm}
\caption{Clipped private power iteration}

\label{alg:Algorithm-clipped-pca}

\begin{algorithmic}[1]

\STATE \textbf{Input}: Datapoints $y_{1},\ldots,y_{n}\in\R^{d}$,
approximate eigenvalue $\Lambda$,  privacy parameters $\epsilon,\delta\in(0,1]$

\STATE Let $\gamma=1/256,u=C\log\frac{nT}{\gamma}$, clipping radius
$R=\Lambda(\sqrt{d}+\sqrt{2u})\sqrt{2u}$, time $T=\frac{C\log(ed)}{\alpha}$
where $C$ is a sufficiently large constant, stricter privacy parameter
$\bar{\delta}=\min\left\{ \delta,\frac{\gamma}{10nT}\right\} .$

\STATE Sample $x_{0}$ uniformly from the unit sphere in $\R^{d}$

\STATE \textbf{for} $t=0\dots T-1$

\STATE $\qquad$$q_{t}=\frac{1}{n}\sum^{n}_{i=1}\clip_{R}(y_{i}y^{\top}_{i}x_{t})$

\STATE $\qquad$Sample $g_{t}\sim N(0,\tau^{2}I_{d})$ where $\tau=\frac{4R\sqrt{T\log(1/\bar{\delta})}}{n\eps}$

\STATE $\qquad$Let $z_{t}=q_{t}+g_{t}$ and $x_{t+1}=\frac{z_{t}}{\|z_{t}\|}$

\STATE \textbf{end for}

\STATE \textbf{Sample $J\sim\operatorname{Unif}\{0,\ldots,T-1\}$
and return} $x_{J}$

\end{algorithmic}
\end{algorithm}

The algorithm is described in Algorithm \ref{alg:Algorithm-clipped-pca}.
It is the standard power method with a few changes. First, in each
iteration, the contribution of each data point is clipped to radius
$R=\wtO(\Lambda\sqrt{d})$ and the Gaussian mechanism is applied to
average to protect privacy. Second, we return a random iterate as
opposed to the last one. 

\textbf{Analysis outline.} The privacy analysis is done in Section
\ref{sec:Privacy}. Section \ref{sec:DP-decoupling} shows that because
the algorithm is differentially private, the correlation between the
iterates $x_{t}$ and the datapoints $y_{i}$ is similar to the correlation
between $x_{t}$ and a fresh sample $y\sim N(0,\Sigma)$, which is
small with high probability. Thus, clipping does not happen with high
probability. Conditioned on clipping not happening, the utility analysis
is completed using any analysis of the noisy power method without
clipping. There are many such analyses in the literature such as \cite{AllenZhuLi2017,Shamir2016PCA}.
For completeness and with a relatively simple argument, we include
an analysis for noisy power method in Section \ref{sec:Noisy-power-method}.
Finally in Section \ref{sec:Proof-main-theorem}, we put these analyses
together to prove the main theorem. 

\section{Privacy}\label{sec:Privacy}

Conditioned on all preceding private outputs, $x_{t}$ is fixed. Every
summand in $q_{t}=\frac{1}{n}\sum^{n}_{i=1}\clip_{R}(y_{i}y^{\top}_{i}x_{t})$
has Euclidean norm at most $R$. Therefore replacement of one sample
changes $q_{t}$ by at most 
\[
\Delta_{2}(q_{t})\le\frac{2R}{n}.
\]

\begin{prop}
Privacy of the clipped iteration \label{prop:privacy} Let 
\[
\bar{\delta}\le\min\left\{ \delta,\frac{\gamma}{10nT}\right\} .
\]
It is sufficient to choose 
\begin{equation}
\tau=\frac{2R\sqrt{T\log(1/\bar{\delta})}}{cn\eps}.\label{eq:tau}
\end{equation}
Then the algorithm is $(\eps,\bar{\delta})$-DP.
\end{prop}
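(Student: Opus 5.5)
The plan is to treat the algorithm as an adaptive composition of $T$ Gaussian mechanisms, one per iteration, and then account for the total privacy loss with zero-concentrated DP (or, equivalently, Rényi DP). Conditioned on $x_t$, which is a function only of earlier released outputs $z_0,\dots,z_{t-1}$ and of the data-independent $x_0$, the release $z_t=q_t+g_t$ is a Gaussian mechanism applied to $q_t$. As noted just before the proposition, $q_t$ has $\ell_2$ sensitivity $\Delta_2(q_t)\le 2R/n$ under replacement adjacency, because each clipped summand has norm at most $R$. Normalizing $z_t$ to get $x_{t+1}$, and drawing the index $J$ independently of the data, are post-processing steps and cost no privacy.

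For each step I would apply the standard fact that the Gaussian mechanism with sensitivity $\Delta$ and noise $N(0,\tau^2 I)$ is $\rho$-zCDP with $\rho=\Delta^2/(2\tau^2)$. Here this gives $\rho_t\le 2R^2/(n^2\tau^2)$. Adaptive composition of zCDP adds up, so the whole transcript $(z_0,\dots,z_{T-1})$ is $\rho$-zCDP with
\[
\rho\le\frac{2TR^2}{n^2\tau^2}.
\]
Substituting the choice \eqref{eq:tau} gives
\[
\rho\le\frac{2TR^2}{n^2}\cdot\frac{c^2n^2\eps^2}{4R^2T\log(1/\bar\delta)}=\frac{c^2\eps^2}{2\log(1/\bar\delta)}.
\]

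Next I would convert zCDP to approximate DP. A $\rho$-zCDP mechanism is $(\rho+2\sqrt{\rho\log(1/\bar\delta)},\bar\delta)$-DP. With the bound above,
\[
2\sqrt{\rho\log(1/\bar\delta)}\le\sqrt2\,c\,\eps .
\]
Also $\rho\le c^2\eps^2/(2\log 10)\le c^2\eps$, using $\eps<1$ and $\bar\delta<1/10$. The total is therefore at most $(\sqrt2 c+c^2)\eps\le\eps$ whenever $c\le 1/2$. The algorithm's own choice $\tau=4R\sqrt{T\log(1/\bar\delta)}/(n\eps)$ corresponds to $c=1/2$, so this instance is covered. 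Finally, $\bar\delta\le\delta$ means $(\eps,\bar\delta)$-DP implies $(\eps,\delta)$-DP, which is the form Theorem~\ref{thm:main} needs.

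The main point needing care is not the arithmetic but justifying adaptive composition. Clipping makes the sensitivity bound $2R/n$ hold for every fixed $x_t$ and every pair of neighboring datasets, so it is a worst-case bound and does not rely on the Gaussian data assumption. This is exactly what lets the zCDP composition theorem apply to the adaptively chosen queries. The remaining task is to state the constant $c$ explicitly (an absolute constant, at most $1/2$) so that it matches the noise scale used in the algorithm.
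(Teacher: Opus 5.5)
Your proposal is correct and follows essentially the same argument as the paper. Both treat each step as a Gaussian mechanism with sensitivity $2R/n$, compose over $T$ steps to a total Rényi/zCDP parameter of $c^{2}\epsilon^{2}/(2\log(1/\bar{\delta}))$, and convert to $(\epsilon,\bar{\delta})$-DP. The only difference is cosmetic: the paper fixes the Rényi order $1+2\log(1/\bar{\delta})/\epsilon$ to get $c^{2}\epsilon+\frac{c^{2}\epsilon^{2}}{2\log(1/\bar{\delta})}+\epsilon/2$, whereas you use the optimized zCDP conversion $\rho+2\sqrt{\rho\log(1/\bar{\delta})}$; both are at most $\epsilon$ for $c\le 1/2$.
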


\begin{proof}
One iteration has sensitivity at most $\frac{2R}{n}$. Thus, each
iteration is $(\alpha,\frac{\alpha c^{2}\eps^{2}}{2T\log(1/\bar{\delta})})$-Renyi
DP for some constant $c$. By composition, the algorithm is $(\alpha,\frac{\alpha c^{2}\eps^{2}}{2\log(1/\bar{\delta})})$-RDP.
By the conversion to approximate DP, this algorithm is $\left(\frac{\alpha c^{2}\eps^{2}}{2\log(1/\bar{\delta})}+\frac{\log(1/\bar{\delta})}{\alpha-1},\bar{\delta}\right)$-DP.
With $\alpha=1+2\log(1/\bar{\delta})/\eps$, the algorithm is $\left(c^{2}\eps+\frac{c^{2}\eps^{2}}{2\log(1/\bar{\delta})}+\eps/2,\bar{\delta}\right)$-DP.
The proposition follows from the appropriate choice of $c\le1/2$.

Note that while our privacy parameter is stricter $\bar{\delta}\le\min\left\{ \delta,\frac{\gamma}{10nT}\right\} $
than provided, this is generally not a big issue in practice because
for a meaningful privacy guarantee, we need $\delta\ll1/n$.
\end{proof}

\section{DP decoupling and the no-clipping event}\label{sec:DP-decoupling}

In this section, we show that clipping does not happen with high probability.
We start by showing that the correlation between the iterates and
the samples is small. The preceding Lemma \ref{lem:one-sample} does
so for a fixed vector $x$. The iterate $x_{t}$, however, depends
on the same samples $y_{i}$. The next lemma decouples this dependence.
\begin{lem}
\label{lem:decouple} Let
\[
D=(y_{1},\ldots,y_{n})
\]
be i.i.d.\ from a distribution $P$, and let $M(D)$ be the output
of an $(\eps_{0},\delta_{0})$-DP algorithm with input $D$. Suppose
an event $E(y,z)$ satisfies 
\[
\sup_{z}\Prob_{y\sim P}[E(y,z)]\le p.
\]
Then 
\[
\Prob[E(y_{i},M(D))]\le e^{\eps_{0}}p+\delta_{0}
\]
for every $i$.
\end{lem}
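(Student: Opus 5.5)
The argument is a replacement (ghost-sample) coupling. We compare the real experiment, in which the $i$-th datapoint feeds into $M$, with one in which $M$ sees an independent copy in its place.

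Fix $i$ and draw a fresh sample $y'\sim P$, independent of $D$ and of the internal randomness of $M$. Let $D^{(i)}$ be $D$ with $y_{i}$ replaced by $y'$. Then $D$ and $D^{(i)}$ are neighboring under replacement adjacency for every realization. The point is that $(y_{i},D^{(i)})$ has the same joint law as $(y',D)$: both consist of $n+1$ i.i.d.\ draws from $P$, arranged so that the distinguished point is independent of the dataset handed to $M$. It follows that $y_{i}$ is independent of $M(D^{(i)})$.

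First bound the decoupled quantity. Conditioning on the output $z=M(D^{(i)})$ and using independence together with the hypothesis $\sup_{z}\Prob_{y\sim P}[E(y,z)]\le p$, we get
\[
\Prob[E(y_{i},M(D^{(i)}))]=\E_{z}\left[\Prob_{y\sim P}[E(y,z)]\right]\le p.
\]

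Next, transfer this bound to $M(D)$ via differential privacy applied with $y_{i}$ held fixed. Condition on all datapoints $y_{1},\ldots,y_{n}$ and on $y'$. The two inputs $D$ and $D^{(i)}$ are then fixed neighbors. The set $B=\{z:E(y_{i},z)\}$ is a fixed measurable set, since $y_{i}$ is now fixed. The $(\eps_{0},\delta_{0})$-DP guarantee of $M$ gives
\[
\Prob_{M}[M(D)\in B]\le e^{\eps_{0}}\Prob_{M}[M(D^{(i)})\in B]+\delta_{0}.
\]
Taking expectation over $y_{1},\ldots,y_{n},y'$ and combining with the previous display yields $\Prob[E(y_{i},M(D))]\le e^{\eps_{0}}p+\delta_{0}$.

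The only delicate point is making sure DP is applied to a set that does not depend on $M$'s randomness, and that the conditioning is legitimate. Both are handled by fixing all data first: $E$ is evaluated at the fixed point $y_{i}$, and $M$'s coins are independent of the data. Beyond this we need only measurability of $E$, which I would assume implicitly, as the statement does.
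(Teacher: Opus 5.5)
Your proposal is correct and follows essentially the same ghost-sample argument as the paper. Both fix all the data and apply the DP inequality to the set $\{z: E(y_{i},z)\}$. Both then average over the data and use that $y_{i}$ is independent of $M$ run on the dataset in which $y_{i}$ is replaced by an independent copy.
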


\begin{proof}
Let $y_{i}'$ be an independent copy of $y_{i}$, and let $D_{-i}$
denote the remaining observations. For fixed $y$, write 
\[
E_{y}=\{z:E(y,z)\}.
\]
For any values of $D_{-i},y_{i},y_{i}'$, differential privacy gives
\[
\Prob[M(D_{-i},y_{i})\in E_{y_{i}}\mid D_{-i},y_{i},y_{i}']\le e^{\eps_{0}}\Prob[M(D_{-i},y_{i}')\in E_{y_{i}}\mid D_{-i},y_{i},y_{i}']+\delta_{0}
\]
Averaging over all variables $D_{-i},y_{i},y_{i}'$, we obtain
\[
\Prob_{D_{-i},y_{i},y_{i}'}[M(D_{-i},y_{i})\in E_{y_{i}}]\le e^{\eps_{0}}\Prob_{D_{-i},y_{i},y_{i}'}[M(D_{-i},y_{i}')\in E_{y_{i}}]+\delta_{0}
\]
The random variable $M(D_{-i},y_{i}')$ is independent of $y_{i}$.
Thus, conditioning on its value $z$, 
\[
\Prob_{y_{i}}[E(y_{i},z)]\le p.
\]
The result follows from combining the last two inequalities.
\end{proof}

Given the decoupling lemma and the fact that a fresh sample is unlikely
to correlate with the iterate by Lemma~\ref{lem:decouple} and \ref{lem:one-sample},
we next show that with high probability, the algorithm does not clip.
\begin{prop}
\label{prop:no-clipping} Let 
\[
u=C\log\frac{nT}{\gamma},
\]
and set 
\begin{equation}
R=\Lambda(\sqrt{d}+\sqrt{2u})\sqrt{2u}.\label{eq:R}
\end{equation}
Suppose the algorithm is $(\eps,\bar{\delta})$-DP, where $\eps\le1$
and $\bar{\delta}\le\frac{\gamma}{2nT}$. Then 
\[
\Prob\left[\exists i,t:\|y_{i}y^{\top}_{i}x_{t}\|>R\right]\le\gamma.
\]
Consequently,
\[
R=\wtO(\lambda_{1}\sqrt{d}).
\]
\end{prop}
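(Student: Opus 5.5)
The plan is to fix a pair $(i,t)$, bound the probability that $\|y_{i}y^{\top}_{i}x_{t}\|>R$ by combining Lemma~\ref{lem:decouple} with Lemma~\ref{lem:one-sample}, and then take a union bound over all $nT$ pairs. Since $\Lambda\ge\lambda_{1}$, the threshold $R$ is at least $\lambda_{1}(\sqrt{d}+\sqrt{2u})\sqrt{2u}$. We need $u\ge1$, which holds because $C$ is large and $nT/\gamma\ge256$.

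For each $t$, let $M_{t}(D)=x_{t}$, viewed as a mechanism on the dataset. The iterate $x_{t}$ is a post-processing of the random start $x_{0}$ and the noisy vectors $z_{0},\ldots,z_{t-1}$. This transcript is a prefix of the full transcript of Algorithm~\ref{alg:Algorithm-clipped-pca}, so by post-processing $M_{t}$ is $(\eps,\bar{\delta})$-DP; the composition bound of Proposition~\ref{prop:privacy} applies to any prefix with at most $T$ steps. Define the event
\[
E(y,z)=\left\{ \|yy^{\top}z\|>\lambda_{1}(\sqrt{d}+\sqrt{2u})\sqrt{2u}\right\}
\]
when $z$ is a unit vector, and let $E$ be empty otherwise. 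Every $x_{t}$ is a unit vector almost surely, since $z_{t}\neq0$ a.s. Lemma~\ref{lem:one-sample} gives $\sup_{z}\Prob_{y\sim N(0,\Sigma)}[E(y,z)]\le3e^{-u}$. Lemma~\ref{lem:decouple} then yields
\[
\Prob[\|y_{i}y_{i}^{\top}x_{t}\|>R]\le e^{\eps}\cdot3e^{-u}+\bar{\delta}\le3e\cdot e^{-u}+\bar{\delta}.
\]

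Summing over the $nT$ pairs gives a total failure probability of at most $3e\,nT e^{-u}+nT\bar{\delta}$. The second term is at most $\gamma/2$ by the assumption $\bar{\delta}\le\gamma/(2nT)$. For the first term, $u=C\log(nT/\gamma)$ gives $e^{-u}=(\gamma/(nT))^{C}$, so $3e\,nT e^{-u}\le3e\,\gamma^{C}(nT)^{1-C}\le\gamma/2$ once $C\ge2$, since $\gamma<1$ and $nT\ge1$. Adding the two terms gives total probability at most $\gamma$.

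For the final claim, $u=O(\log(nT/\gamma))$ is logarithmic, and $\Lambda\le2\lambda_{1}$, so $R=O(\lambda_{1}(\sqrt{d}+\sqrt{u})\sqrt{u})=\wtO(\lambda_{1}\sqrt{d})$. This uses $n=\widetilde{\Theta}(d/\alpha^{2}+d/(\eps\alpha))$ and $T=O(\log(ed)/\alpha)$, so $\log(nT)$ is polylogarithmic in the suppressed parameters.

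The main subtlety is justifying the use of Lemma~\ref{lem:decouple} for a fixed $t$. The mechanism whose output is $x_{t}$ must be differentially private as a function of the whole dataset, including $y_{i}$ itself. This holds because privacy is guaranteed on all inputs, and the clipping keeps the sensitivity bounded regardless of whether clipping actually occurs. The decoupling lemma needs no independence between $x_{t}$ and $y_{i}$, only that the event is rare for a fresh sample uniformly over the output $z$, so the adaptive dependence is handled automatically.
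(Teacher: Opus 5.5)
Your proof is correct and matches the paper's argument step for step. Post-processing makes each prefix iterate $x_t$ $(\eps,\bar\delta)$-DP; Lemma~\ref{lem:one-sample} with $\lambda_1\le\Lambda$ bounds the fresh-sample probability by $3e^{-u}$; Lemma~\ref{lem:decouple} transfers this to $3e^{\eps-u}+\bar\delta$; and a union bound over the $nT$ pairs finishes. One small correction: the claim that $C\ge 2$ suffices ``since $\gamma<1$'' actually needs $6e\,\gamma^{C-1}\le 1$, which holds for the paper's $\gamma=1/256$ but not for arbitrary $\gamma<1$ (the paper avoids this by taking $C$ sufficiently large).
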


\begin{proof}
Every prefix iterate $x_{t}$ is post-processing of a prefix of the
private transcript and is therefore itself $(\eps,\bar{\delta})$-DP.

For a fixed $x$, Lemma~\ref{lem:one-sample} and $\lambda_{1}\le\Lambda$
imply 
\[
\Prob_{y}[\|yy^{\top}x\|>R]\le3e^{-u}.
\]
Lemma~\ref{lem:decouple} therefore gives 
\[
\Prob[\|y_{i}y^{\top}_{i}x_{t}\|>R]\le3e^{\eps-u}+\bar{\delta}.
\]
Union bounding over $i\in[n]$ and $t<T$, 
\[
\Prob[\exists i,t:\|y_{i}y^{\top}_{i}x_{t}\|>R]\le3nTe^{\eps-u}+\underbrace{nT\bar{\delta}}_{\le\gamma/2}.
\]
For a sufficiently large constant $C$ in $u=C\log\frac{nT}{\gamma}$,
the right-hand side is at most $\gamma$. 
\end{proof}

\section{Noisy power method }\label{sec:Noisy-power-method}

Next, we analyze the algorithm when there is no clipping. A similar
result is shown by \cite{Shamir2016PCA} with $1/d$ success probability.
For completeness, we include a proof with a constant success probability.

Let $W\succeq0$ be fixed, let $\lambda=\|W\|_{2}>0$, and fix a unit
vector $v$ with $Wv=\lambda v$. Consider the iterative method that
starts with $x_{0}$ chosen uniformly at random from the unit sphere
and performs the following update for every $0\leq t<T$: 
\begin{equation}
x_{t+1}=\frac{Wx_{t}+g_{t}}{\|Wx_{t}+g_{t}\|},\qquad g_{t}\stackrel{\mathrm{iid}}{\sim}N(0,\tau^{2}I_{d})\label{eq:ideal-power}
\end{equation}

\begin{thm}
\label{thm:noisy} There are universal constants $c_{0}=2^{-14},C>0$
such that, for every integer $T\ge1$, if 
\begin{equation}
s:=\frac{d\tau^{2}T}{\lambda^{2}}\le c_{0},\label{eq:noise-assumptions}
\end{equation}
then with probability at least $1/16$, 
\begin{equation}
\frac{1}{T}\sum^{T-1}_{t=0}(\lambda-x^{\top}_{t}Wx_{t})\le C\left[\frac{\lambda\log(2d)}{T}+\sqrt{\frac{d\tau^{2}}{T}}+\frac{d\tau^{2}}{\lambda}\right].\label{eq:avg}
\end{equation}
For an independent uniform $J\in\{0,\ldots,T-1\}$, the same bound,
after increasing $C$, holds for $\lambda-x^{\top}_{J}Wx_{J}$ with
probability at least $1/32$.
\end{thm}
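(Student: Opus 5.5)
Our plan is a potential-function (martingale) analysis in the spirit of the SGD analysis cited in the acknowledgements, applied to the log-Rayleigh-type quantity $\Phi_t:=\log\bigl(\|W^{1/2}x_t\|^2\bigr)$, or more conveniently $\Psi_t:=\log(x_t^\top W x_t/\lambda)$. First we handle initialization. Since $x_0$ is uniform on the sphere, $\langle x_0,v\rangle^2\ge c/d$ with constant probability (anti-concentration of one Gaussian coordinate over $\|g\|^2$, using Lemma~\ref{lem:chisq}). Hence $x_0^\top Wx_0\ge\lambda\langle x_0,v\rangle^2$ gives $\Psi_0\ge-\log(d/c)$. Throughout we have $\Psi_t\le0$.

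The core step is a one-step drift inequality. Write $z_t=Wx_t+g_t$. We would compare $\log(z_t^\top W z_t)-\log\|z_t\|^2$ with $\Psi_t$. In the noiseless case, with $x=x_t$, Cauchy--Schwarz and spectral arguments give $\frac{x^\top W^3x}{x^\top W^2x}\ge\frac{x^\top W^2 x}{x^\top W x}\ge x^\top Wx$. This yields a deterministic increase $\Psi_{t+1}-\Psi_t\ge\log\frac{x^\top W^2x}{(x^\top Wx)^2}\cdot(\dots)$. To get a gap-free rate, we would instead use the simpler bound
\[
\frac{(Wx)^\top W(Wx)}{\|Wx\|^2}\ge x^\top Wx \quad\text{and}\quad \log\frac{x^\top W^3x\, x^\top W x}{(x^\top W^2x)^2}\ge0 ,
\]
and extract the progress term $\log\frac{\|Wx\|^2}{\lambda\, x^\top Wx}\ge\dots$. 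The cleaner route is to track $\log\langle x_t,v\rangle^2$-free quantities via $\sum_t \log\frac{\|Wx_t\|^2}{(x_t^\top Wx_t)\lambda}$. Telescoping shows that the total increase of $\log(x^\top W^k x)$-type quantities over $T$ steps is at most $\log(d/c)$. Moreover, each step's increase is at least $\log(\lambda/x_t^\top Wx_t)\cdot(\text{const})\gtrsim(\lambda-x_t^\top Wx_t)/\lambda$, using $x^\top W^2x\ge(x^\top Wx)^2$ in the inverted form. This gives the $\lambda\log(2d)/T$ term.

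Next we add the noise. We expand $\log(z^\top Wz)$ and $\log\|z\|^2$ to second order in $g_t$. The linear terms $2g_t^\top W^2x/(x^\top W^3x)$ and $2g_t^\top Wx/\|Wx\|^2$ form martingale differences. Since $\|Wx\|\ge x^\top Wx\gtrsim\lambda/\!\!\sqrt{\cdot}$, their conditional variances are $O(\tau^2/\|Wx\|^2)$-ish. The quadratic terms contribute a negative drift of order $d\tau^2/\|Wx_t\|^2$ from $\E\|g\|^2$ in the denominator. The main obstacle is that $\|Wx_t\|$ could be small early on, since initially $x_0^\top W x_0$ can be of order $\lambda/d$. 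We would avoid dividing by it by working with the unnormalized product $\prod_t\|z_t\|$. Concretely, we track $\log(v^\top z)$ versus $\log\|z\|$, i.e.\ $\log\langle x_{t},v\rangle^2$ as in Shamir, using $\langle z_t,v\rangle=\lambda\langle x_t,v\rangle+\langle g_t,v\rangle$. The noise in the $v$-direction is one-dimensional, while the normalization noise is controlled by $\|z_t\|^2\le\|Wx_t\|^2+2g_t^\top Wx_t+\|g_t\|^2$, and $\|Wx_t\|^2\le\lambda\,x_t^\top Wx_t$. This produces per step
\[
\log\frac{\langle x_{t+1},v\rangle^2}{\langle x_t,v\rangle^2}\ \gtrsim\ \frac{\lambda-x_t^\top Wx_t}{\lambda}-\frac{d\tau^2}{\lambda^2\langle\cdot\rangle}+\text{martingale},
\]
and the smallness of $\langle x_t,v\rangle$ is exactly what hypothesis~\eqref{eq:noise-assumptions} ($s\le c_0$) must control. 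We would verify, via a stopping-time argument with Freedman/Doob's maximal inequality, that $\lambda\langle x_t,v\rangle^2$ never drops below half its initial scale $\lambda/d$. This holds with constant probability, because the cumulative noise variance $T\tau^2 d/\lambda^2\le c_0$.

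Finally, we sum the per-step inequality over $t<T$ and bound the left side by $\log(d/c)$. We control the martingale sum by Chebyshev with variance $O(T\tau^2/\lambda^2\cdot\lambda/\bar\lambda)$, which gives the $\sqrt{d\tau^2/T}$ term after dividing by $T/\lambda$. The accumulated second-order drift gives the $d\tau^2/\lambda$ term. Intersecting the constant-probability events (initialization, no dip, Chebyshev) yields probability $\ge1/16$. For the random index $J$, Markov's inequality on the nonnegative quantity $\lambda-x_J^\top Wx_J$, conditional on the averaged bound, gives probability $\ge1/2\cdot1/16=1/32$ at the cost of a factor $2$ in $C$.
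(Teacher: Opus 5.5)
Your final route is essentially the paper's proof: the potential $\log\langle x_t,v\rangle^2$, the per-step drift bound from $\langle z_t,v\rangle=\lambda\langle x_t,v\rangle+\langle g_t,v\rangle$ and $\|Wx_t\|^2\le\lambda\,x_t^\top Wx_t$, a stopping time keeping $\langle x_t,v\rangle^2\gtrsim 1/d$, a Doob-type maximal bound on the martingale part plus Markov on the quadratic part, and Markov again for the random index $J$. The paper differs only in details: Paley--Zygmund for the initialization, $\log u\le u-1$ for the normalization term, and a single Doob bound on the stopped sum in place of your separate Chebyshev step. Three things to tighten when you write it out: drop the $\Psi_t=\log(x_t^\top Wx_t/\lambda)$ detour; write the quadratic term as $2r_t^2+\|g_t\|^2/\lambda^2$ with $r_t=\langle g_t,v\rangle/(\lambda\langle x_t,v\rangle)$, so that on the stopped process it has conditional mean $O(d\tau^2/\lambda^2)$ and not $O(d^2\tau^2/\lambda^2)$ as your display literally suggests, which $s\le c_0$ would not control; and note that the condition $|r_t|\le 1/2$ needed for $\log(1+r_t)\ge r_t-r_t^2$ holds on the good event because $2r_t^2\le\sum_t b_t\le 1/4$.
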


The intuition of the proof is as follows. Consider the potential function
$\Phi_{t}=\log\left(\left(v^{\top}x_{t}\right)^{2}\right)$. With
a random initialization, $\Phi_{0}$ is finite and because $v,x_{t}$
are unit vectors, we always have $\Phi_{t}\le0$. We can relate the
change in the potential in each step with the suboptimality $\Delta_{t}=\lambda-x^{\top}_{t}Wx_{t}$
of the solution $x_{t}$:
\[
\Phi_{t+1}-\Phi_{t}\ge\frac{\Delta_{t}}{\lambda}+\text{mean zero noise}-\text{quadratic noise}
\]
Telescoping this inequality gives
\[
\frac{1}{\lambda}\sum_{t}\Delta_{t}\le-\Phi_{0}+\sum_{t}\text{noise}
\]
The noise can be bounded using a martingale argument. This structure
is reminiscent of the proof for the convergence of stochastic gradient
descent.

First, we prove the key inequality mentioned above relating the change
in the potential to the suboptimality plus noise terms.
\begin{lem}
Let $r_{t}=\frac{v^{\top}g_{t}}{\lambda v^{\top}x_{t}},m_{t}=2r_{t}-\frac{2\langle Wx_{t},g_{t}\rangle}{\lambda^{2}}=2\left\langle \frac{v}{\lambda v^{\top}x_{t}}-\frac{Wx_{t}}{\lambda^{2}},g_{t}\right\rangle ,b_{t}=2r^{2}_{t}+\frac{\|g_{t}\|^{2}}{\lambda^{2}}$.
If $|r_{t}|\le1/2$, then 
\begin{equation}
\Phi_{t+1}-\Phi_{t}\ge\frac{\Delta_{t}}{\lambda}+m_{t}-b_{t}.\label{eq:power-drift}
\end{equation}
\end{lem}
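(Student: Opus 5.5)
We compute the potential change directly. Let $a_t = v^\top x_t$, and recall $\Phi_t = \log(a_t^2)$. We have
\[
v^\top(Wx_t+g_t) = \lambda a_t + v^\top g_t = \lambda a_t(1+r_t).
\]
Since $x_{t+1}$ is the normalization of $Wx_t+g_t$, this gives
\[
\Phi_{t+1}-\Phi_t = 2\log|1+r_t| + \log\lambda^2 - \log\|Wx_t+g_t\|^2 .
\]
We will bound the two logarithms separately. First we lower bound $2\log|1+r_t|$. Second we upper bound $\log\|Wx_t+g_t\|^2$ using $\log y\le y-1$.

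For the first term, use the elementary inequality $\log(1+r)\ge r-r^2$, valid for $|r|\le 1/2$. This is exactly where the hypothesis $|r_t|\le 1/2$ enters. It gives
\[
2\log|1+r_t|\ge 2r_t-2r_t^2 .
\]
To verify the inequality, let $h(r)=\log(1+r)-r+r^2$. Then $h(0)=0$ and
\[
h'(r)=\frac{1}{1+r}-1+2r=\frac{r(1+2r)}{1+r}.
\]
This is $\ge0$ for $r\ge0$ and $\le0$ for $-1/2\le r\le 0$. So $h\ge0$ on $[-1/2,1/2]$.

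For the second term, expand
\[
\|Wx_t+g_t\|^2=\|Wx_t\|^2+2\langle Wx_t,g_t\rangle+\|g_t\|^2 .
\]
Using $\log y\le y-1$ with $y=\|Wx_t+g_t\|^2/\lambda^2$, we get
\[
\log\frac{\lambda^2}{\|Wx_t+g_t\|^2}\ge 1-\frac{\|Wx_t\|^2}{\lambda^2}-\frac{2\langle Wx_t,g_t\rangle}{\lambda^2}-\frac{\|g_t\|^2}{\lambda^2}.
\]

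It remains to relate $1-\|Wx_t\|^2/\lambda^2$ to $\Delta_t/\lambda$. The key inequality is
\[
\|Wx_t\|^2\le\lambda\, x_t^\top Wx_t .
\]
This holds because $W\succeq0$ with $\|W\|_2=\lambda$, hence $W^2\preceq\lambda W$ (check eigenvalue-wise: $\mu^2\le\lambda\mu$ for $0\le\mu\le\lambda$). It follows that
\[
1-\frac{\|Wx_t\|^2}{\lambda^2}\ge 1-\frac{x_t^\top Wx_t}{\lambda}=\frac{\Delta_t}{\lambda}.
\]

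Summing the bounds, the linear noise terms are
\[
2r_t-\frac{2\langle Wx_t,g_t\rangle}{\lambda^2}=m_t .
\]
The alternative expression for $m_t$ in the statement follows from $r_t=\langle v/(\lambda a_t),g_t\rangle$. The quadratic terms are
\[
-2r_t^2-\frac{\|g_t\|^2}{\lambda^2}=-b_t .
\]
Together these yield \eqref{eq:power-drift}.

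The calculation is routine. The only points needing care are the scalar inequality for $\log(1+r)$, which requires the hypothesis $|r_t|\le1/2$, and the inequality $W^2\preceq\lambda W$, which uses $W\succeq 0$. We should also note that $a_t\neq0$ almost surely, so that $r_t$ is well-defined; this follows from the random initialization and the Gaussian noise.
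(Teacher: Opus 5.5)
Your proposal is correct and follows the paper's proof step for step. Both write the potential change as $2\log|1+r_t|-\log(\|Wx_t+g_t\|^2/\lambda^2)$, use $\log(1+r)\ge r-r^2$ for $|r|\le 1/2$ and $\log y\le y-1$, and apply $W^2\preceq\lambda W$ to turn $\|Wx_t\|^2/\lambda^2$ into $1-\Delta_t/\lambda$. Your explicit check of the scalar inequality via $h'(r)=r(1+2r)/(1+r)$ is a detail the paper leaves implicit.
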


\begin{proof}
Since $Wv=\lambda v$, 
\begin{align}
\Phi_{t+1}-\Phi_{t} & =2\log\frac{\left|v^{\top}x_{t+1}\right|}{\left|v^{\top}x_{t}\right|}\nonumber \\
 & =2\log\frac{\left|v^{\top}Wx_{t}+v^{\top}g_{t}\right|}{\|Wx_{t}+g_{t}\|\left|v^{\top}x_{t}\right|}\nonumber \\
 & =2\log\frac{\left|\lambda v^{\top}x_{t}+v^{\top}g_{t}\right|}{\|Wx_{t}+g_{t}\|\left|v^{\top}x_{t}\right|}\nonumber \\
 & =2\log\frac{\lambda\left|1+\frac{v^{\top}g_{t}}{\lambda v^{\top}x_{t}}\right|}{\|Wx_{t}+g_{t}\|}\nonumber \\
 & =2\log\left|1+r_{t}\right|-\log\frac{\|Wx_{t}+g_{t}\|^{2}}{\lambda^{2}}\label{eq:change-potential}
\end{align}
The first term above is bounded by $\log(1+r_{t})\ge r_{t}-r^{2}_{t}$
when $\left|r_{t}\right|\le1/2$. Next we bound the second term.

Because $0\preceq W\preceq\lambda I$, we have $W^{2}\preceq\lambda W$,
and hence 
\[
\frac{\|Wx_{t}\|^{2}}{\lambda^{2}}\le\frac{x^{\top}_{t}Wx_{t}}{\lambda}=1-\frac{\Delta_{t}}{\lambda}.
\]
Using $\log u\le u-1$ for $u>0$, we obtain 
\[
\log\frac{\|Wx_{t}+g_{t}\|^{2}}{\lambda^{2}}\le-\frac{\Delta_{t}}{\lambda}+\frac{2\langle Wx_{t},g_{t}\rangle}{\lambda^{2}}+\frac{\|g_{t}\|^{2}}{\lambda^{2}}.
\]
Substituting the bounds into Equation \ref{eq:change-potential} proves
the lemma.
\end{proof}

Next we show that the initial potential $\Phi_{0}$ is bounded with
constant probability.
\begin{lem}
For the random initialization on the sphere, we have
\begin{equation}
\Prob\left[\left(v^{\top}x_{0}\right)^{2}\ge\frac{1}{2d}\right]\ge1/12\label{eq:power-init}
\end{equation}
\end{lem}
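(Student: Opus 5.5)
The plan is to represent the uniform point on the sphere as a normalized Gaussian, $x_{0}=h/\|h\|$ with $h\sim N(0,I_{d})$. Then
\[
(v^{\top}x_{0})^{2}=\frac{(v^{\top}h)^{2}}{\|h\|^{2}},
\]
where $v^{\top}h\sim N(0,1)$ by rotation invariance, since $v$ is a fixed unit vector. It therefore suffices to lower bound the probability of the event $\{(v^{\top}h)^{2}\ge a\}\cap\{\|h\|^{2}\le 2da\}$ for a suitable constant $a$. For instance, with $a=1/4$ we need $|v^{\top}h|\ge 1/2$ and $\|h\|^{2}\le d/2$; the latter fails badly, so instead take $a$ so that $2da\ge$ roughly $2d$, i.e.\ aim for $(v^{\top}h)^{2}\ge 1$ and $\|h\|^{2}\le 2d$.

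\textbf{Step 1.} The standard Gaussian tail gives $\Prob[|v^{\top}h|\ge1]=2(1-\Phi(1))\approx0.317$.

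\textbf{Step 2.} Control $\|h\|^{2}$. By Lemma~\ref{lem:chisq} with $u$ a constant, $\Prob[\|h\|^{2}\ge d+2\sqrt{du}+2u]\le e^{-u}$. For large $d$ the threshold $d+2\sqrt{du}+2u$ is at most $2d$, but for small $d$ it is not. It is cleaner to use Markov's inequality instead: $\Prob[\|h\|^{2}>2d]\le 1/2$. That alone gives only $0.317-0.5<0$, so the constants must be tuned.

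\textbf{Step 3.} The main obstacle is this constant bookkeeping, and it is removed by decoupling the two events. Write $\|h\|^{2}=(v^{\top}h)^{2}+\|P_{\perp}h\|^{2}$, where $P_{\perp}h$ is the projection orthogonal to $v$. This is a $\chi^{2}_{d-1}$ variable independent of $v^{\top}h$. Now require
\[
(v^{\top}h)^{2}\ge1 \quad\text{and}\quad \|P_{\perp}h\|^{2}\le 2d-(v^{\top}h)^{2}.
\]
To keep things simple, impose $1\le(v^{\top}h)^{2}\le d$ together with $\|P_{\perp}h\|^{2}\le d$; these give $\|h\|^{2}\le 2d$, and then $(v^{\top}h)^{2}/\|h\|^{2}\ge 1/(2d)$.

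By independence, the probability of this event is the product of the two factors:
\begin{itemize}
\item Since $\chi^{2}_{d-1}$ has mean $d-1<d$, and the median of $\chi^{2}_{k}$ is below its mean, $\Prob[\|P_{\perp}h\|^{2}\le d]\ge1/2$. For $d=1$ the projection is zero and the event holds trivially.
\item $\Prob[1\le(v^{\top}h)^{2}\le d]\ge\Prob[1\le|v^{\top}h|]-\Prob[|v^{\top}h|>\sqrt d]$. For $d\ge4$ this is at least $0.317-0.046>1/6$. For $d=1$ the claim is trivial, since $(v^{\top}x_{0})^{2}=1$. For $d=2,3$, replace the upper cap by $(v^{\top}h)^{2}\le 2d-\|P_{\perp}h\|^{2}$, or check the bound directly: here $\Prob[1\le (v^{\top}h)^{2}\le 2]\approx0.16$, which suffices after pairing it with $\|P_{\perp}h\|^{2}\le 2d-2$, whose probability is at least $1/2$.
\end{itemize}

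Multiplying gives at least $\tfrac12\cdot\tfrac16=1/12$. The only delicate points are the small-$d$ cases and the median-of-$\chi^{2}$ fact, which can be replaced by an explicit computation if needed.
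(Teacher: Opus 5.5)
Your decoupling route is valid and differs from the paper's, but one numerical step does not close as written. For $d\in\{2,3\}$ you pair $\Prob[1\le (v^{\top}h)^{2}\le 2]\approx 0.160$ with the bound $\Prob[\|P_{\perp}h\|^{2}\le 2d-2]\ge 1/2$. However, $0.160\cdot\frac12=0.080<\frac1{12}$, so that pairing falls short. You need the actual values $\Prob[\chi^{2}_{1}\le 2]\approx 0.843$ and $\Prob[\chi^{2}_{2}\le 4]=1-e^{-2}\approx 0.865$, which give about $0.135$.

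More to the point, the upper cap on $(v^{\top}h)^{2}$ is unnecessary. It is the only reason you need the case split and the median-of-$\chi^{2}$ fact. Since $a\mapsto a/(a+b)$ is increasing, the conditions $(v^{\top}h)^{2}\ge 1$ and $\|P_{\perp}h\|^{2}\le 2d-1$ already give $(v^{\top}x_{0})^{2}\ge 1/(2d)$. Markov's inequality gives $\Prob[\chi^{2}_{d-1}\le 2d-1]\ge 1-\frac{d-1}{2d-1}>\frac12$. By independence, the probability is then at least $\frac12\Prob[|Z|\ge 1]\approx 0.159$ for every $d$, with $Z\sim N(0,1)$. This is what your Step 2 was missing: once the two pieces are independent you multiply the probabilities instead of subtracting a failure probability, and plain Markov suffices.

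The paper works with $x_{0,1}^{2}$ directly rather than with the unnormalized Gaussian. From $g=Rx_{0}$ with $R$ independent of $x_{0}$, it reads off the exact moments $\E[x_{0,1}^{2}]=1/d$ and $\E[x_{0,1}^{4}]=3/(d^{2}+2d)$. Paley--Zygmund at threshold $\frac12$ then gives $(d^{2}+2d)/(12d^{2})\ge 1/12$ in one line, uniformly in $d$. That argument uses no numerical Gaussian tail values and explains exactly where the constant $1/12$ comes from. Your argument, in the streamlined form above, is equally short and gives a slightly better constant. It relies on the numerical value of the standard normal tail at $1$ and on the orthogonal decomposition $\|h\|^{2}=(v^{\top}h)^{2}+\|P_{\perp}h\|^{2}$, rather than on exact spherical moments.
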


\begin{proof}
By the rotational symmetry of the initialization, it suffices to consider
the case $v=e_{1}$, the first standard basis vector. A $d$-dimensional
gaussian vector $g$ is identically distributed as $Rx_{0}$ where
$R^{2}\sim\chi^{2}_{d}$ and $R,x_{0}$ are independent. Note that
$\Var[R^{2}]=2d$ and $\E[R^{2}]=d$ so $\E[R^{4}]=2d+d^{2}$. Considering
the first coordinate of $g$, we have
\[
3=\E[g^{4}_{1}]=\E[(Rx_{0,1})^{4}]=\E[R^{4}]\E[x^{4}_{0,1}]
\]
and thus $\E[x^{4}_{0,1}]=\frac{3}{2d+d^{2}}$. We also have $\E[x^{2}_{0,1}]=\frac{1}{d}$
by symmetry of the coordinates. By Paley-Zygmund's inequality,
\[
\Prob\left[x^{2}_{0,1}\ge\frac{1}{2d}\right]\ge\frac{1}{4}\frac{1/d^{2}}{3/\left(2d+d^{2}\right)}\ge\frac{1}{12}
\]
\end{proof}

Next, we are ready to prove Theorem \ref{thm:noisy}.
\begin{proof}[Proof of Theorem~\ref{thm:noisy}]
Fix any initialization with $\Phi_{0}\ge\log(1/2d)$. We define the
stopping time 
\[
\tstop:=\inf\left\{ t\in\{0,\ldots,T\}:\left(v^{\top}x_{t}\right)^{2}<\frac{1}{8d}\right\} ,
\]
with $\inf\varnothing=\infty$. Let $\left(\mathcal{F}_{t}\right)_{0\leq t\leq T}$
be the filtration where $\mathcal{F}_{t}=\sigma(x_{0},g_{0},\ldots,g_{t-1})$.
Define stopped variables $M_{t}=m_{t}$ and $B_{t}=b_{t}$ for $t<\tstop$,
and $M_{t}=B_{t}=0$ otherwise.

On $t<\tstop$, conditional on $\mathcal{F}_{t}$, we have
\[
\E[r^{2}_{t}\mid\mathcal{F}_{t}]=\frac{\tau^{2}}{\lambda^{2}\left(v^{\top}x_{t}\right)^{2}}\le\frac{8d\tau^{2}}{\lambda^{2}}.
\]
Hence
\[
\E[b_{t}\mid\mathcal{F}_{t}]=\E\left[2r^{2}_{t}+\frac{\|g_{t}\|^{2}}{\lambda^{2}}|\mathcal{F}_{t}\right]\le\frac{17d\tau^{2}}{\lambda^{2}}.
\]
 On $t<\tstop$, we also have 
\begin{align*}
\E[m^{2}_{t}\mid\mathcal{F}_{t}] & =\E\left[4\left\langle \frac{v}{\lambda v^{\top}x_{t}}-\frac{Wx_{t}}{\lambda^{2}},g_{t}\right\rangle ^{2}|x_{t}\right]\\
 & =4\tau^{2}\left\Vert \frac{v}{\lambda v^{\top}x_{t}}-\frac{Wx_{t}}{\lambda^{2}}\right\Vert ^{2}\\
 & \le4\tau^{2}\left(\frac{2\left\Vert v\right\Vert ^{2}}{\lambda^{2}\left(v^{\top}x_{t}\right)^{2}}+\frac{2\left\Vert Wx_{t}\right\Vert ^{2}}{\lambda^{4}}\right)\\
 & \le4\tau^{2}\left(\frac{16d}{\lambda^{2}}+\frac{2}{\lambda^{2}}\right)\le\frac{72d\tau^{2}}{\lambda^{2}}
\end{align*}
The inequality follows from $\|Wx_{t}\|\le\lambda$ and $1/\left(v^{\top}x_{t}\right)^{2}\le8d$.
Since the stopped variables vanish after stopping, for every $t<T$
we therefore have 
\[
\E[M_{t}\mid\mathcal{F}_{t}]=0,\qquad\E[M^{2}_{t}\mid\mathcal{F}_{t}]\le\frac{72d\tau^{2}}{\lambda^{2}},\qquad\E[B_{t}\mid\mathcal{F}_{t}]\le\frac{17d\tau^{2}}{\lambda^{2}}.
\]
Let $s=d\tau^{2}T/\lambda^{2}$. Let $S_{k}=\sum_{t<k}M_{t}$. Note
that
\[
\mathbb{E}[S^{2}_{k+1}\mid\mathcal{F}_{k}]=S^{2}_{k}+\mathbb{E}[M^{2}_{k}\mid\mathcal{F}_{k}]\ge S^{2}_{k}
\]
and thus $S^{2}_{k}$ is a submartingale. Additionally, 
\[
\E[S^{2}_{k}]=\sum_{t<k}\E\left[\E[M^{2}_{t}\mid\mathcal{F}_{t}]\right]\le72s.
\]
By Doob's inequality applied to the submartingale $S^{2}_{k}$ and
Markov's inequality,
\begin{align*}
\Prob\left[\max_{k\le T}S^{2}_{k}>\left(32\right)^{2}s\right] & \le\frac{72}{32^{2}},\\
\Prob\left[\sum_{t<T}B_{t}>128s\right] & \le\frac{17}{128}.
\end{align*}
If $s=0$, both sums vanish and no tail bound is needed. Thus the
event 
\begin{equation}
\max_{k\le T}\left|\sum_{t<k}M_{t}\right|\le32\sqrt{s}\text{ and }\sum_{t<T}B_{t}\le128s\label{eq:power-good}
\end{equation}
has probability at least $1-\frac{72}{1024}-\frac{17}{128}\ge3/4$
conditioned on the good initialization event \eqref{eq:power-init}.

Recall the assumption \ref{eq:noise-assumptions} that $s=\frac{d\tau^{2}T}{\lambda^{2}}\le c_{0}$.
Take $c_{0}=2^{-14}$. On the event \eqref{eq:power-good}, $32\sqrt{s}\le1/4$
and $128s\le1/4$. Every active step with $t<\tstop$ also has $2r^{2}_{t}\le B_{t}\le1/4$,
so $|r_{t}|<1/2$. Consequently \eqref{eq:power-drift} applies up
to and including the step that would cross the stopping threshold.
If $\tstop\le T$, telescoping it for $t<\tstop$ gives 
\[
\Phi_{\tstop}\ge\Phi_{0}-\frac{1}{2},\qquad\left(v^{\top}x_{\tstop}\right)^{2}\ge e^{-1/2}\left(v^{\top}x_{0}\right)^{2}>\frac{1}{8d},
\]
contradicting the definition of $\tstop$. Thus $\tstop>T$ throughout
the good event \eqref{eq:power-good}.

On the intersection of the events \eqref{eq:power-init} and \eqref{eq:power-good},
no stopping occurs. Summing \eqref{eq:power-drift} and using $\Phi_{T}\le0$
yields 
\[
\frac{1}{\lambda}\sum_{t<T}\Delta_{t}\le-\Phi_{0}+32\sqrt{s}+128s\le\log(2d)+32\sqrt{s}+128s=\log(2d)+32\sqrt{d\tau^{2}T/\lambda^{2}}+128d\tau^{2}T/\lambda^{2}
\]
Multiplication by $\lambda/T$ proves \eqref{eq:avg}. 
\[
\frac{1}{T}\sum_{t<T}\Delta_{t}\le\frac{\lambda\log(2d)}{T}+32\sqrt{d\tau^{2}/T}+128d\tau^{2}/\lambda
\]
The initialization event has probability at least $1/12$, and the
noise event has conditional probability at least $3/4$ for every
such initialization. Their intersection therefore has probability
at least $1/16$. For any execution in this intersection, nonnegativity
of the $\Delta_{t}$'s implies that a uniformly chosen iterate has
error at most twice their average with probability at least $1/2$.
This gives the claimed $1/32$ success probability.
\end{proof}

\begin{cor}
\label{cor:noisy} There is a universal sufficiently large constant
$C$ such that the following holds. For $0<\alpha<1$, set $T=\left\lceil C\frac{\log(2d)}{\alpha}\right\rceil $.
If the noise satisfies $\frac{d\tau^{2}T}{\lambda^{2}}\le c_{0}=2^{-14}$
then a uniform output iterate satisfies $x^{\top}_{J}Wx_{J}\ge(1-\alpha)\lambda$
with probability at least $1/32$. 
\end{cor}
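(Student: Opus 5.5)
We will derive Corollary \ref{cor:noisy} directly from Theorem \ref{thm:noisy}, using its second statement (the uniformly chosen iterate $x_J$). That statement says that with probability at least $1/32$,
\[
\lambda-x_J^{\top}Wx_J\le C'\left[\frac{\lambda\log(2d)}{T}+\sqrt{\frac{d\tau^{2}}{T}}+\frac{d\tau^{2}}{\lambda}\right],
\]
where $C'$ is the constant from that theorem, possibly enlarged. It therefore suffices to show that each of the three terms is at most $\alpha\lambda/(3C')$ once $T=\lceil C\log(2d)/\alpha\rceil$ with $C$ large and $s=d\tau^{2}T/\lambda^{2}\le c_0$.

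We bound each term by rewriting it through $s$.
\begin{itemize}
\item The first term satisfies $\lambda\log(2d)/T\le\lambda\alpha/C$, which is at most $\alpha\lambda/(3C')$ once $C\ge 3C'$.
\item The second term is $\sqrt{d\tau^2/T}=\lambda\sqrt{s}/T\le \lambda\sqrt{c_0}/T$. Since $T\ge C\log(2d)/\alpha\ge C\log 2/\alpha$, this is at most $\lambda\alpha\sqrt{c_0}/(C\log 2)$, which is small for large $C$.
\item The third term is $d\tau^2/\lambda=\lambda s/T\le \lambda c_0/T\le \lambda c_0\alpha/(C\log 2)$, which is also small.
\end{itemize}
Summing the three bounds gives $\lambda-x_J^{\top}Wx_J\le\alpha\lambda$, that is, $x_J^{\top}Wx_J\ge(1-\alpha)\lambda$, with probability at least $1/32$.

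The only point needing care is that the second and third terms are controlled not by $s\le c_0$ alone but by the factor $1/T\lesssim\alpha$. This requires $T\gtrsim1/\alpha$, which holds because $\log(2d)\ge\log 2>0$. Everything else is bookkeeping of constants: we choose $C$ as a function of $C'$ and $c_0$ only, so that it remains universal.
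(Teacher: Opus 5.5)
Your proposal is correct and matches the paper's proof: apply the uniform-iterate part of Theorem~\ref{thm:noisy}, rewrite the second and third error terms as $\lambda\sqrt{s}/T$ and $\lambda s/T$ with $s\le c_0$, and use $1/T\le\alpha/(C\log(2d))$ to make all three terms a small multiple of $\alpha\lambda$ once $C$ is large. Naming the theorem's constant $C'$ separately from the corollary's $C$ is slightly cleaner bookkeeping than the paper, which uses the symbol $C$ for both.
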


\begin{proof}
The noise condition makes $\sqrt{d\tau^{2}/T}\le\sqrt{c_{0}}\lambda/T$
and $d\tau^{2}/\lambda\le c_{0}\lambda/T$. By applying Theorem~\ref{thm:noisy},
we obtain 
\[
\frac{1}{T}\sum_{t<T}\Delta_{t}\le\lambda\cdot\left(\frac{\alpha}{C}+\frac{1}{4}\frac{\alpha}{C\log(2d)}+\frac{\alpha}{C\log(2d)}\right).
\]
The corollary follows from setting a sufficiently large $C$.
\end{proof}

\section{Proof of the main theorem}\label{sec:Proof-main-theorem}

We are now ready to prove Theorem \ref{thm:main}.
\begin{proof}[Proof of Theorem \ref{thm:main}]
 Privacy follows from Proposition~\ref{prop:privacy} on arbitrary
datasets. Next, we analyze the utility. Define 
\[
\whSigma=\frac{1}{n}\sum^{n}_{i=1}y_{i}y^{\top}_{i},\qquad\widehat{\lambda}=\|\whSigma\|_{2}.
\]
Take $T=\lceil C\log(2d)/\alpha\rceil$, and choose $R$ and $\tau$
as in Propositions~\ref{prop:no-clipping} and~\ref{prop:privacy}.
Set $\gamma=1/256$. Thus 
\begin{equation}
R=\Theta(\Lambda\left(\sqrt{du}+u\right)),\qquad\tau=\Theta\left(\frac{R\sqrt{T\log(1/\bar{\delta})}}{n\eps}\right),\qquad d\tau^{2}=\Theta\left(\frac{\Lambda^{2}(d+u)udT\log(1/\bar{\delta})}{n^{2}\eps^{2}}\right).\label{eq:V}
\end{equation}
Lemma \ref{lem:cov} gives 
\begin{equation}
\|\whSigma-\Sigma\|_{2}\le c\sqrt{d/n}\lambda_{1}\le c\alpha\lambda_{1},\qquad\tfrac{1}{2}\lambda_{1}\le\widehat{\lambda}\le2\lambda_{1}\label{eq:cov-good}
\end{equation}
with probability at least $1-\gamma$ if $n\ge C(d+\log(1/\gamma))/\alpha^{2}$.
On this event, $\Lambda\le4\widehat{\lambda}$, so 
\[
\frac{d\tau^{2}T}{\widehat{\lambda}^{2}}=\Theta\left(\frac{(d+u)udT^{2}\log(1/\bar{\delta})}{n^{2}\eps^{2}}\right).
\]
Consequently a sample bound of $n=\Omega(T\sqrt{du(d+u)\log(1/\bar{\delta})}/\eps)=\Omega(\log(2d)\sqrt{du(d+u)\log(1/\bar{\delta})}/(\alpha\eps))$,
with sufficiently large constants, ensures the noise condition $\frac{d\tau^{2}T}{\widehat{\lambda}^{2}}\le c_{0}=2^{-14}$
in Theorem~\ref{thm:noisy}. Its three error terms are 
\[
\frac{\widehat{\lambda}\log(2d)}{T}=O(\alpha\lambda_{1}),\quad\sqrt{\frac{d\tau^{2}}{T}}=O\left(\frac{\lambda_{1}\sqrt{du(d+u)\log(1/\bar{\delta})}}{n\eps}\right),\quad\frac{d\tau^{2}}{\widehat{\lambda}}=O\left(\frac{\lambda_{1}du(d+u)\log(2d)\log(1/\bar{\delta})}{\alpha n^{2}\eps^{2}}\right).
\]
The same sample bound makes the latter two terms $O(\alpha\lambda_{1})$.

The total error is 
\[
\lambda_{1}-x^{\top}\Sigma x\le(\widehat{\lambda}-x^{\top}\widehat{\Sigma}x)+2\left\Vert \whSigma-\Sigma\right\Vert _{2}
\]
where the first term is bounded by Theorem \ref{thm:noisy} and the
second term is bounded by Lemma \ref{lem:cov}.

Thus, for $n=\Theta\left(\frac{\log(2d)\sqrt{du(d+u)\log(1/\bar{\delta})}}{\alpha\eps}+\frac{d+1}{\alpha^{2}}\right)$
samples, Theorem~\ref{thm:noisy} bounds the empirical error of the
ideal non-clipping algorithm's output by $O(\alpha\lambda_{1})$ with
probability at least $1/32$. Clipping happens with probability at
most $\gamma$. When clipping does not happen, the actual algorithm
and the ideal coincide.

Lemma \ref{lem:cov} shows that the empirical covariance is close
to the distributional one with probability $1-\gamma$. Thus, the
probability that the empirical covariance is accurate, non-clipping
and the algorithm succeeds is at least $\frac{1}{32}-\gamma-\gamma\ge1/64$.
\end{proof}

\bibliographystyle{plain}
\bibliography{gaussian-pca}

\begin{thebibliography}{10}

\bibitem{AllenZhuLi2017}
Zeyuan Allen-Zhu and Yuanzhi Li.
\newblock First efficient convergence for streaming {$k$}-{PCA}: A global,
  gap-free, and near-optimal rate.
\newblock In {\em 2017 IEEE 58th Annual Symposium on Foundations of Computer
  Science (FOCS)}, pages 487--492. IEEE, 2017.

\bibitem{DBLP:conf/colt/BalcanDWY16}
Maria{-}Florina Balcan, Simon~Shaolei Du, Yining Wang, and Adams~Wei Yu.
\newblock An improved gap-dependency analysis of the noisy power method.
\newblock In {\em {COLT}}, volume~49 of {\em {JMLR} Workshop and Conference
  Proceedings}, pages 284--309. JMLR.org, 2016.

\bibitem{DavidsonSzarek01}
Kenneth~R. Davidson and Stanis{\l}aw~J. Szarek.
\newblock Local operator theory, random matrices and banach spaces.
\newblock In {\em Handbook of the Geometry of Banach Spaces}, volume~1, pages
  317--366. Elsevier, 2001.

\bibitem{dOrsiN26}
Tommaso d'Orsi and Gleb Novikov.
\newblock Tight differentially private {PCA} via matrix coherence.
\newblock In Kasper~Green Larsen and Barna Saha, editors, {\em Proceedings of
  the 2026 Annual {ACM-SIAM} Symposium on Discrete Algorithms, {SODA} 2026,
  Vancouver, BC, Canada, January 11-14, 2026}, pages 10--51. {SIAM}, 2026.

\bibitem{DBLP:conf/stoc/DworkTT014}
Cynthia Dwork, Kunal Talwar, Abhradeep Thakurta, and Li~Zhang.
\newblock Analyze {Gauss}: optimal bounds for privacy-preserving principal
  component analysis.
\newblock In {\em {STOC}}, pages 11--20. {ACM}, 2014.

\bibitem{GuKumarTianYang2026}
Anming Gu, Syamantak Kumar, Kevin Tian, and Chutong Yang.
\newblock Gap-free streaming {PCA} beyond rank-one updates: Near-optimal rates
  and applications to differential privacy, 2026.
\newblock arXiv:2609.26508.

\bibitem{DBLP:conf/nips/HardtP14}
Moritz Hardt and Eric Price.
\newblock The noisy power method: {A} meta algorithm with applications.
\newblock In {\em {NIPS}}, pages 2861--2869, 2014.

\bibitem{DBLP:conf/stoc/HardtR12}
Moritz Hardt and Aaron Roth.
\newblock Beating randomized response on incoherent matrices.
\newblock In {\em {STOC}}, pages 1255--1268. {ACM}, 2012.

\bibitem{DBLP:conf/stoc/HardtR13}
Moritz Hardt and Aaron Roth.
\newblock Beyond worst-case analysis in private singular vector computation.
\newblock In {\em {STOC}}, pages 331--340. {ACM}, 2013.

\bibitem{KarwaVadhan2018}
Vishesh Karwa and Salil Vadhan.
\newblock Finite sample differentially private confidence intervals.
\newblock In {\em 9th Innovations in Theoretical Computer Science Conference
  (ITCS 2018)}, volume~94 of {\em Leibniz International Proceedings in
  Informatics (LIPIcs)}, pages 44:1--44:9. Schloss Dagstuhl -- Leibniz-Zentrum
  f{\"u}r Informatik, 2018.

\bibitem{LaurentMassart2000}
B{\'e}atrice Laurent and Pascal Massart.
\newblock Adaptive estimation of a quadratic functional by model selection.
\newblock {\em The Annals of Statistics}, 28(5):1302--1338, 2000.

\bibitem{DBLP:conf/nips/LiuK0O22}
Xiyang Liu, Weihao Kong, Prateek Jain, and Sewoong Oh.
\newblock {DP-PCA:} statistically optimal and differentially private {PCA}.
\newblock In {\em NeurIPS}, 2022.

\bibitem{liu2022differential}
Xiyang Liu, Weihao Kong, and Sewoong Oh.
\newblock Differential privacy and robust statistics in high dimensions.
\newblock In {\em {COLT}}, pages 1167--1246. PMLR, 2022.

\bibitem{LiuNNEN23}
Zijian Liu, Ta~Duy Nguyen, Thien~Hang Nguyen, Alina Ene, and Huy Nguyen.
\newblock High probability convergence of stochastic gradient methods.
\newblock In {\em {ICML}}, volume 202 of {\em Proceedings of Machine Learning
  Research}, pages 21884--21914. PMLR, 23--29 Jul 2023.

\bibitem{NguyenEN26}
Ta~Duy Nguyen, Alina Ene, and Huy~Le Nguyen.
\newblock Adaptive power iteration method for differentially private {PCA}.
\newblock {\em CoRR}, abs/2602.11454, 2026.

\bibitem{NicolasSMMC25}
Julien Nicolas, César Sabater, Mohamed Maouche, Sonia~Ben Mokhtar, and Mark
  Coates.
\newblock Differentially private and decentralized randomized power method,
  2025.

\bibitem{Shamir2016PCA}
Ohad Shamir.
\newblock Convergence of stochastic gradient descent for {PCA}.
\newblock In {\em Proceedings of the 33rd International Conference on Machine
  Learning}, volume~48 of {\em Proceedings of Machine Learning Research}, pages
  257--265. PMLR, 2016.

\end{thebibliography}

\end{document}